\newcommand{\nc}{\newcommand}
\newcommand{\DMO}{\DeclareMathOperator}
\DeclareMathAlphabet\mathbfcal{OMS}{cmsy}{b}{n}
\nc{\MS}{\mathcal{S}}
\nc{\MP}{\mathcal{P}}
\nc{\MR}{\mathcal{R}}
\nc{\cM}{\mathcal{M}}
\nc{\cS}{\mathcal{S}}
\nc{\cI}{\mathcal{I}}
\nc{\cA}{\mathcal{A}}
\nc{\tcA}{\tilde{\cA}}
\nc{\MZ}{\mathcal{Z}}
\DMO{\Binom}{Binom}
\newcommand{\E}{\mathbb{E}}
\DMO{\Var}{Var}
\newcommand{\bX}{\mathbf{X}}
\nc{\tbx}{\tilde{\bx}}
\nc{\tbX}{\tilde{\bX}}
\nc{\tZ}{\tilde{Z}}
\nc{\tz}{\tilde{z}}
\newcommand{\bU}{\mathbf{U}}
\nc{\tbU}{\tilde{\bU}}
\newcommand{\bT}{\mathbf{T}}
\nc{\tbT}{\tilde{\bT}}
\newcommand{\bD}{\mathbf{D}}
\nc{\tbD}{\tilde{\bD}}
\newcommand{\bx}{\mathbf{x}}
\newcommand{\N}{\mathbb{N}}
\nc{\BN}{\mathbb{N}}
\nc{\BZ}{\mathbb{Z}}
\newcommand{\bone}{\mathbf{1}}
\newcommand{\bA}{\mathbf{A}}
\nc{\tbA}{\tilde{\bA}}
\newcommand{\cD}{\mathcal{D}}
\DeclareMathOperator{\supp}{supp}
\newcommand{\eps}{\epsilon}
\newcommand{\tO}{\widetilde{O}}
\newcommand{\rad}{\mathrm{rad}}
\newcommand{\tOmega}{\tilde{\Omega}}
\nc{\cost}{\mathrm{cost}}
\nc{\fcost}{\mathrm{f\text{-}cost}}
\nc{\ccost}{\mathrm{c\text{-}cost}}
\nc{\bN}{\mathbf{N}}
\nc{\cT}{\mathcal{T}}
\nc{\minset}{\mathrm{min\text{-}set}}
\nc{\opt}{\mathrm{OPT}}
\nc{\tf}{\tilde{f}}
\nc{\rsg}{\leftrightsquigarrow}
\nc{\bff}{\mathbf{f}}
\newtheorem{theorem}{Theorem}
\newtheorem{lemma}[theorem]{Lemma}
\newtheorem{definition}[theorem]{Definition}
\title{Improved Lower Bound for Differentially Private Facility Location}
\author{{Pasin Manurangsi}\\ Google Research, Bangkok, Thailand\\ \texttt{pasin@google.com}}
\date{\today}
\begin{document}

\maketitle

\begin{abstract}
We consider the differentially private (DP) facility location problem in the so called \emph{super-set output} setting proposed by Gupta et al.~\cite{GLMRT10}. The current best known expected approximation ratio for an $\eps$-DP algorithm is $O\left(\frac{\log n}{\sqrt{\eps}}\right)$ due to Cohen-Addad et al.~\cite{CEFG22} where $n$ denote the size of the metric space, meanwhile the best known lower bound is $\Omega(1/\sqrt{\eps})$~\cite{EGLW19}.

In this short note, we give a lower bound of $\tOmega\left(\min\left\{\log n, \sqrt{\frac{\log n}{\eps}}\right\}\right)$ on the expected approximation ratio of any $\eps$-DP algorithm, which is the first evidence that the approximation ratio has to grow with the size of the metric space.
\end{abstract}

\section{Introduction}

In the (metric) facility location (FL) problem, we are given a metric space $\cM = (V, d)$, facility costs $\bff = (f_v)_{v \in V}$ and input clients $\bX = (x_1, \dots, x_m) \in V^m$. The goal is to output a set $S \subseteq V$ of facilities to open to minimize its cost, which is defined as
\begin{align*}
\cost_{\cM, \bff, \bX}(S) = \sum_{s \in S} f_s + \sum_{i \in [m]} \min_{u \in S} d(x_i, u).
\end{align*}
The first term $\sum_{u \in S} f_s$ is often referred to as the \emph{facility cost} while the term $\sum_{i \in [m]} \min_{u \in S} d(x_i, u)$ is the \emph{connection cost}. Furthermore, we write $\opt_{\cM, \bff, \bX}$ to denote the minimum cost among all possible sets $S$, i.e. $\opt_{\cM, \bff, \bX} := \min_{S \subseteq V} \cost_{\cM, \bff, \bX}(S)$.

This is a classic problem in combinatorial optimization which has seen significant amount of progress over the years~\cite{Hochbaum82,ShmoysTA97,JainV99,CharikarG99,GuhaK99,KorupoluPR00,JainMS02,JainMMSV03,ChudakS03,MahdianYZ06,ByrkaA10,Li13}. Currently, the best known approximation ratio is 1.488~\cite{Li13} while it is known to be hard to approximate to within a factor of 1.463~\cite{GuhaK99}.

In recent years, privacy concern has led to numerous works on algorithms that preserve the users' privacy. Here we will use the notion of differential privacy (DP). To define DP, we say that two input vectors $\bX, \bX'$ are adjacent iff they differ on a single coordinate:

\begin{definition}[Adjacent Datasets]
Two datasets $\bX = (x_1, \dots, x_m) \in V^m$ and $\bX' = (x'_1, \dots, x'_m) \in V^m$ are \emph{adjacent} if there exists $i^* \in [m]$ such that $x_i = x'_i$ for all $i \in [m] \setminus \{i^*\}$.
\end{definition}

At a high level, an algorithm is DP if, when running it on two adjacent datasets, the output distributions are similar. This is formalized below.

\begin{definition}[Differential Privacy~\cite{DworkMNS06}] \label{def:dp}
For $\eps \geq 0$, an algorithm $A$ is said to be $\eps$-differentially private (or $\eps$-DP) if, for any adjacent inputs $\bX, \bX'$ and any output $o$, we have $\Pr[A(\bX) = o] \leq e^\eps \cdot \Pr[A(\bX') = o]$.
\end{definition}

DP provides a rigorous privacy guarantee for the users' data and enjoys several nice properties. For more background on the topic, we refer the readers to~\cite{DworkR14}.

As alluded to earlier, many combinatorial optimization problems have been studied under the additional DP restriction. Here, the question is: for a given $\eps$, what is the best approximation ratio\footnote{Sometimes additive approximation is also allowed in addition to multiplicative approximation.} an $\eps$-DP algorithm can achieve? Indeed, this question was studied for the facility location problem by Gupta et al.~\cite{GLMRT10}. Unfortunately, they showed that any $O(1)$-DP algorithm must have approximation ratio at least $\Omega(\sqrt{n})$ where $n = |V|$.

Due to the aforementioned barrier, Gupta et al.  proposed to study a slightly different setting called the \emph{super-set output} setting. Roughly speaking, this means that the algorithm will declare for every vertex in $V$ which facility should serve it. Then, the facility cost is computed with respect to only the facilities that serve at least one client in the input $\bX$. This is formalized below.
%In the super-set output setting, the output of the algorithm is instead a mapping $\psi: V \to V$ which tells each client $u$ to use facility $\psi(u)$. The facility cost is then only charged for the facilities that are used. More formally, let $S_{\psi} := \psi(\{x_1, \dots, x_m\})$ denote the set of used facilities; then, the cost of the solution $\psi$ in this setting is $\cost_{\cM, \bff, \bX}(\psi) = \cost_{\cM, \bff, \bX}(S_{\psi})$.

\begin{definition}[Super-set Output Facility Location (SOFL)]
In the \emph{super-set output facility location (SOFL)} problem, we are again given a metric space $\cM = (V, d)$, facility costs $\bff = (f_v)_{v \in V}$ and input clients $\bX = (x_1, \dots, x_m) \in V^m$. The goal is to output a mapping $\psi: V \to V$. The set of open facilities for $\psi$ (w.r.t. $\bX$) is defined as $S_{\psi, \bX} := \psi(\{x_1, \dots, x_m\})$. The cost of the solution $\psi$ is 
\begin{align*}
\cost_{\cM, \bff, \bX}(\psi) := \sum_{s \in S_{\psi, \bX}} f_s + \sum_{i \in [m]} d(x_i, \psi(x_i)).
\end{align*}

We say that a randomized algorithm $A$ has expected approximation ratio $\alpha$ iff, for all $\cM, \bff, \bX$,
\begin{align*}
\E_{\psi \sim A(\bX; \cM, \bff)}[\cost_{\cM, \bff, \bX}(\psi)] \leq \alpha \cdot \opt_{\cM, \bff, \bX}.
\end{align*}
\end{definition}

Note that, although the values of $\psi(v)$ for $v \notin \{x_1, \dots, x_m\}$ does not affect  $\cost_{\cM, \bff, \bX}(\psi)$, they may still need to be carefully constructed to respect the privacy constraint (\Cref{def:dp}).

Gupta et al. \cite{GLMRT10} gave an $\eps$-DP algorithm for SOFL with expected approximation ratio $\tO\left(\frac{(\log n \log \Delta)^2}{\eps}\right)$ where $\Delta$ is the diameter of $\cM$. Later, Esencayi et al.~\cite{EGLW19} and Cohen{-}Addad et al.~\cite{CEFG22} improved\footnote{Initially, the approximation ratio claimed in~\cite{EGLW19} was $O\left(\frac{\log n}{\eps}\right)$ but there is an issue in the analysis. Cohen{-}Addad et al.~\cite{CEFG22} both fixed this issue and improved the ratio to $O\left(\frac{\log n}{\sqrt{\eps}}\right)$.} the approximation ratio to $O\left(\frac{\log n}{\sqrt{\eps}}\right)$. In all these works, the first step is always to embed the general metric space $\cM$ to a tree metric~\cite{FRT04}; however, it is well known that such embedding must have (expected) distortion at least $\Omega(\log n)$~\cite{Bartal98}. Thus, such an approach cannot achieve an approximation ratio better than $\Omega(\log n)$. Meanwhile, the only lower bound for the problem is $O(1/\sqrt{\eps})$~\cite{EGLW19}. This leaves us with an intriguing question: \emph{Does the approximation ratio for DP SOFL in the super-set output setting have to grow with $n$?}

\subsection{Our Contribution}

We answer this question positively by showing the following lower bound:
\begin{theorem} \label{thm:lb-general}
For any $0 < \eps \leq O(1)$ and any sufficiently large $n$, there exists a metric space $\cM$ (with $n$ points) such that any $\eps$-DP algorithm for SOFL on $\cM$ in the super-set output setting must incur an expected approximation ratio of at least $\Omega\left(\min\left\{\frac{\log n}{\log \log n}, \sqrt{\frac{\log n}{\eps}}\right\}\right)$.
\end{theorem}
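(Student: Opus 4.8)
The plan is to exhibit a single metric space $\cM$ with facility costs $\bff$, together with a large family of inputs that any low-error algorithm must ``separate'', and then argue via group privacy (iterating \Cref{def:dp}) that no $\eps$-DP algorithm can separate them when $\eps$ is small relative to $\log n$. Concretely, I would take $\cM$ to be a \emph{hierarchical} metric of depth $L$ and branching factor $b$ with $n \approx b^{L}$ points — think of the leaves of a depth-$L$, $b$-ary tree, with the distance between two leaves controlled (geometrically, with a moderate ratio) by the level of their least common ancestor and the facility cost of a node controlled by its level — and crucially \emph{not} an (ultra-)metric tree, since for tree metrics the known $O(1/\sqrt\eps)$ ratio already rules out an $n$-dependent lower bound. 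The target parameters are $L \approx \min\{\log n/\log\log n, \sqrt{\log n/\eps}\}$ with $b \approx n^{1/L}$; the requirement that the per-level branching satisfy $\log b \gtrsim \log\log n$ is exactly what yields the $\log n/\log\log n$ cap. For every root-to-leaf path $P$ (there are $b^L \approx n$ of them) I would define an input $\bX_P$ by placing, at each level $\ell$, a small controlled number $c_\ell$ of clients inside the level-$\ell$ subtree dictated by $P$ (possibly together with a ``background'' common to all inputs), arranged so that $\opt_{\cM,\bff,\bX_P}$ is small and is essentially witnessed by the solution that serves the clients along $P$.

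The heart of the argument is a two-way implication. First, I would show that any solution $\psi$ whose cost on $\bX_P$ is within a factor $o(L)$ of $\opt_{\cM,\bff,\bX_P}$ must ``commit to $P$'' at many levels — e.g. for suitably many levels $\ell$, $\psi$ must route the level-$\ell$ clients of $\bX_P$ into the \emph{correct} level-$\ell$ subtree — because a wrong routing decision at level $\ell$ forces an additive cost comparable to (a $\Theta(1)$ fraction of) $\opt$, and, by the geometry of the construction, these per-level costs do not ``absorb'' into one another. Consequently the ``good'' events for distinct paths sharing a common prefix are essentially disjoint. Second, I would bound the Hamming distance between $\bX_P$ and $\bX_{P'}$: for paths agreeing up to level $\ell-1$ this is at most the number of clients placed at levels $\geq \ell$, which the construction keeps small. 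A standard packing/group-privacy argument then says that if an $\eps$-DP algorithm outputs the good event for $\bX_P$ with constant probability it must do so for each of the $b$ siblings $\bX_{P'}$ with probability $\geq e^{-\eps t}\cdot\Omega(1)$, where $t$ is this coordinate distance; disjointness forces $e^{\eps t} \gtrsim b$, i.e. $\eps t \gtrsim \log b$. Iterating this across the $L$ levels of the recursion, and using that $\eps t \ll \log b$ for our choice of parameters, yields the contradiction, hence $\alpha = \Omega(L) = \Omega(\min\{\log n/\log\log n,\sqrt{\log n/\eps}\})$.

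The step I expect to be the main obstacle is the first half of that implication — converting ``$\psi$ is an $o(L)$-approximation'' into ``$\psi$ commits to $P$ at many levels''. This is the usual tension in facility-location lower bounds: one must pick the distance scales, the facility costs, and the per-level client counts $c_\ell$ so that simultaneously (i) $\opt$ is genuinely small (otherwise an additive error is only a constant-factor error); (ii) a wrong decision at a single level is individually expensive relative to $\opt$; and yet (iii) the optimal solution is cheap to realize and robust, keeping all $\bX_P$ close in Hamming distance. Getting (i)–(iii) to hold across all $L$ levels at once is the delicate part of the design, and it is also precisely what caps the bound at $\sqrt{\log n/\eps}$ rather than something larger like $\log n/\eps$ (which would contradict the $O(\log n/\sqrt\eps)$ upper bound). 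A secondary technical point is passing from the ``forced-event'' statement, which is cleanest for a deterministic $\psi$, to randomized algorithms and to an \emph{expected}-approximation guarantee; I would handle this by a Markov/averaging argument, first over the randomness of the algorithm and then over the levels of the recursion.
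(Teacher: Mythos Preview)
Your high-level framework (packing plus group privacy, with the two parameter regimes producing the $\min$) is the same as the paper's, but there is a genuine gap at exactly the step you yourself flag as the main obstacle. You correctly note that a tree metric cannot work, since on trees there is an $O(1/\sqrt{\eps})$ algorithm; yet the construction you actually describe --- leaves of a $b$-ary depth-$L$ tree with distances governed by the LCA level --- \emph{is} an HST/ultrametric. You write ``crucially not an (ultra-)metric tree'' but never say what replaces it. Without that replacement the ``good events for distinct $P$ are essentially disjoint'' step simply fails: on an HST a single $\psi$ (for instance the output of the tree algorithm) can be an $O(1/\sqrt{\eps})$-approximate solution for \emph{all} of your inputs $\bX_P$ simultaneously, so there is nothing to pack. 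The ``commit to $P$ at many levels'' implication you need is therefore not provable from the ingredients you have specified.

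The paper fills this gap with a concrete, non-hierarchical metric: the shortest-path metric of a $d$-regular graph $G$ of girth $g \approx \sqrt{\log n/\eps}$, with $d \approx n^{1/g}$. There is no recursion over levels. The datasets are flat: pick a random vertex $v$ and take $m \approx \sqrt{\log n/\eps}$ random neighbors of $v$; the facility cost is set to $f = g/2 - 1$. The structural lemma that plays the role of your ``$\psi$ must commit'' step is \Cref{lem:neighbor-map-size-bound}: if $\rad_G(\psi) < g/2 - 1$ (which one may assume w.l.o.g., since any farther assignment can be replaced by $\psi(u)=u$ without increasing cost), then for all but a $1/\sqrt{d}$ fraction of vertices $v$, the map $\psi$ sends $N_G(v)$ to at least $d - \sqrt{d}$ distinct points. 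The proof is a short counting argument using only the girth: two neighbors $u,u'$ of $v$ can share the image $w=\psi(u)=\psi(u')$ only if one of them lies on the unique shortest path $P_{w\rsg v}$ and the other has $v$ on its unique shortest path to $w$, and summing over $v$ shows this second situation happens at most once per vertex in total. Consequently any fixed low-radius $\psi$ is a good solution for at most an $m^2/\sqrt{d}$ fraction of the datasets, and a single application of group privacy over $m$ coordinates (no level-by-level iteration) gives the contradiction.

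In short, the missing idea is the choice of metric --- high-girth regular graphs, which are locally tree-like (so $\opt$ is small) but globally far from any tree (so no single bounded-radius $\psi$ can cluster a random neighborhood). Once you use that metric, the multi-level recursion you propose is unnecessary, and your anticipated difficulty (``a wrong decision at a single level is expensive yet all $\bX_P$ stay Hamming-close'') dissolves into the one-line parameter choice $m \approx \log d/\eps$ with $m f \gg \gamma\cdot\opt$.
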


\paragraph{Proof Overview.} Our proof is based on the so-called ``packing'' framework~\cite{HardtT10}: We construct many small datasets such that any output $\psi$ can be a good approximate solution to only a small fraction of these datasets. By DP property and with appropriate parameters, this will allow us to conclude that no $\eps$-DP algorithm achieves a good approximation ratio.

The datasets we use are quite simple. We pick $\cM$ to be a shortest path metric of some graph $G$. Then, the datasets we consider are all the $m$ vertices that share a neighbor. To see why these are hard datasets, let us imagine for the moment that the graph $G$ is some random graph. For $\psi$ to have a low connection cost, each vertex has to be mapped to a nearby vertex (in shortest path distance). However, we show below that this implies that most vertices in a random dataset (as described earlier) are mapped to different facilities. Thus, the number of facilities open are large, leading to a large facility cost. We formalize this argument below. In fact, we do not need $G$ to be fully random but rather we only require that its girth is large.

\section{Preliminaries}

For an undirected unweighted graph $G = (V, E)$, we write $d_G$ to denote its shortest path metric, i.e. $d_G(u, v)$ is equal to the shortest path distance between $u$ and $v$. We write $N_G(v)$ to denote the set of neighbors of $v$ in $G$. The \emph{girth} of $G$ is the length of its smallest cycle. For two vertices $u, v$ such that $d_G(u, v) < g/2$, there must be a unique shortest path between them and we write $P_{u \rsg v}$ to denote this path.

We will use a classic result of Erdos and Sachs on the existence of regular large-girth graphs\footnote{An English version of the statement and its proof can be found in~\cite[Appendix C]{EJ08}; the bound we state here is weaker than the ones stated there but is sufficient for our purpose.}. 

\begin{theorem}[{\cite{erdos1963regulare}}] \label{thm:large-girth-graph}
For any integers $g \geq 3, d \geq 2$ and any even integer $n \geq 4d^g$, there is an $n$-vertex $d$-regular graph with girth at least $g$. 
\end{theorem}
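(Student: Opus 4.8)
The plan is to run the classical Erd\H{o}s--Sachs ``extremal graph plus local surgery'' argument. The case $d=2$ is immediate: since $n \ge 4 d^g = 4\cdot 2^g \ge g \ge 3$, the $n$-cycle $C_n$ is a simple $2$-regular graph of girth $n \ge g$. So assume $d \ge 3$, and suppose for contradiction that no $d$-regular graph on $n$ vertices has girth $\ge g$. As $n$ is even and $n \ge d+1$, simple $d$-regular graphs on $n$ vertices exist (e.g.\ circulants); among them let $G$ have maximum girth $h$, so $3 \le h \le g-1$ by our assumption and the fact that $G$ contains a cycle. Among all $d$-regular graphs on $n$ vertices of girth $h$, further pick $G$ to minimize the number of cycles of length exactly $h$. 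I will construct a $d$-regular graph $G'$ on $n$ vertices with girth $\ge h$ and strictly fewer length-$h$ cycles: then $G'$ either has girth $>h$ (impossible, as $h$ is the maximum) or has girth $h$ but beats $G$ at the second extremal level --- a contradiction either way.

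For the surgery, fix a shortest cycle $C$ of $G$ and an edge $e=v_1v_2$ on it. Call an edge $e'=xy$ of $G$ \emph{good} if all four distances $d_G(v_i,z)$ with $i\in\{1,2\}$, $z\in\{x,y\}$ are at least $h$; equivalently, neither endpoint of $e'$ lies in $W := B_{h-1}(v_1)\cup B_{h-1}(v_2)$, the set of vertices within distance $h-1$ of $v_1$ or $v_2$. Given a good edge $e'=xy$, let $G'$ be obtained from $G$ by deleting $e$ and $e'$ and adding the edges $v_1x$ and $v_2y$. Each of $v_1,v_2,x,y$ loses and gains exactly one incident edge, and goodness ($d_G(v_1,x), d_G(v_2,y)\ge h\ge 3$) ensures the two added edges are not already present and are distinct, so $G'$ is again a simple $d$-regular graph on $n$ vertices.

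The heart of the argument is controlling the cycles of $G'$. A cycle of $G'$ using no new edge lies in $G$, hence has length $\ge h$. A cycle using exactly one new edge, say $v_1x$, is that edge together with an $x$--$v_1$ path in $G-e-e'$, so has length $\ge 1+d_G(v_1,x)\ge 1+h$. A cycle using both new edges passes through the four distinct vertices $v_1,x,v_2,y$ with $v_1,x$ consecutive and $v_2,y$ consecutive, so removing the two new edges splits it into two arcs (lying in $G-e-e'$) pairing the endpoints either as $\{x,y\},\{v_1,v_2\}$ --- giving length $\ge 2+d_{G-e'}(x,y)+d_{G-e}(v_1,v_2)\ge 2+(h-1)+(h-1)$, using girth $h$ --- or as $\{x,v_2\},\{y,v_1\}$ --- giving length $\ge 2+d_G(x,v_2)+d_G(y,v_1)\ge 2+h+h$; the second (``crossed'') case is exactly why goodness must control all four cross-distances. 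In every case the length exceeds $h$, so $G'$ has girth $\ge h$ and no length-$h$ cycle of $G'$ uses a new edge. Therefore the length-$h$ cycles of $G'$ are precisely those of $G$ avoiding both $e$ and $e'$; since $C$ is a length-$h$ cycle through $e$, at least one was destroyed and none created --- the desired strict decrease.

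It remains to ensure a good edge exists, i.e.\ that $|E(G)|=nd/2$ exceeds the number of edges meeting $W$. Since $G$ is $d$-regular, a breadth-first count gives $|W|\le 2\bigl(1+d+d(d-1)+\cdots+d(d-1)^{h-2}\bigr)=O(d^{h-1})=O(d^{g-1})$, hence at most $O(d^g)$ edges meet $W$; a short calculation shows this is below $nd/2$ once $n\ge 4d^g$ (here $d\ge 3$ absorbs the constants). I expect the main obstacle to be precisely this bookkeeping --- pinning down the constants in the ball bound and the forbidden-edge count so that they sit comfortably under the stated threshold $n\ge 4d^g$ --- together with the careful case analysis of the cycles the surgery might create, the delicate one being the ``crossed'' cycle through both new edges, which is what forces the four-distance definition of goodness rather than a two-distance one.
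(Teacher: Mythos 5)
Your proof is correct: the paper gives no proof of its own (it cites Erd\H{o}s--Sachs and defers to the exposition in \cite[Appendix C]{EJ08}), and your two-level extremal choice plus the $e,e'\mapsto v_1x,v_2y$ swap with the four-distance goodness condition is exactly that classical argument. The cycle case analysis and the counting ($|W|\le 2\sum_{j\le h-1}d^j<d^{g-1}\cdot\tfrac{2d}{d-1}$, so fewer than $O(d^g)$ edges meet $W$, comfortably below $nd/2\ge 2d^{g+1}$) both check out under the hypothesis $n\ge 4d^g$.
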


\section{Main Proof}

For $\psi: V \to V$, we write $\rad_G(\psi)$ to denote $\max_{v \in V} d_G(v, \psi(v))$. As mentioned in the proof overview, a key lemma is to show that $\psi$ maps most points in a random vertex's neighborhood to different facilities. This is formalized below.

\begin{lemma} \label{lem:neighbor-map-size-bound}
Let $G$ be any $n$-vertex $d$-regular graph with girth at least $g$.
For any $\psi: V \to V$ such that $\rad_G(\psi) < g / 2 - 1$, we have
\begin{align*}
\Pr_{v \in V}[|\psi(N_G(v))| < d - \sqrt{d}] < 1/\sqrt{d}.
\end{align*}
\end{lemma}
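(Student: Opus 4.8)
The plan is to bound the total number of ``collisions'' $\sum_{v\in V}\bigl(d-|\psi(N_G(v))|\bigr)$ by $n$ and then apply Markov's inequality to the integer-valued random variable $d-|\psi(N_G(v))|$ for $v$ uniform in $V$. Throughout, set $r:=\rad_G(\psi)<g/2-1$; since $d_G(u,\psi(u))\le r<g/2$ for every $u$, each shortest path $P_{u\rsg\psi(u)}$ is unique. Say $u\in N_G(v)$ is a \emph{return neighbor} of $v$ if $v$ lies on $P_{u\rsg\psi(u)}$; because $v\in N_G(u)$, this forces $v$ to be the second vertex of $P_{u\rsg\psi(u)}$ (were it a later vertex, $u$ together with the initial segment of the path up to $v$ would form a cycle of length at most $r+1<g$). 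Write $R_v\subseteq N_G(v)$ for the set of return neighbors of $v$, and note every return neighbor $u$ satisfies $\psi(u)\ne u$.

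The heart of the argument is the claim that $\psi$ is injective on $N_G(v)\setminus R_v$. Suppose not: let $u\ne w$ be non-return neighbors of $v$ with $\psi(u)=\psi(w)=:t$, and consider the closed walk that follows $P_{u\rsg t}$ from $u$ to $t$, then $P_{w\rsg t}$ reversed from $t$ to $w$, then the edges $\{w,v\}$ and $\{v,u\}$. Its length is $d_G(u,t)+d_G(w,t)+2\le 2r+2<g$, so it is a closed walk strictly shorter than the girth; consequently it traverses every edge of $G$ an even number of times, because the edges it uses an odd number of times form an even subgraph of $G$, which, if nonempty, would contain a cycle of length less than $g$. But the edge $\{u,v\}$ is traversed exactly once: it does not lie on $P_{u\rsg t}$ (else $v$ would be the second vertex of that path, making $u$ a return neighbor of $v$), it does not lie on $P_{w\rsg t}$ (else $v$ would lie on $P_{w\rsg t}$, making $w$ a return neighbor of $v$), and it is not $\{w,v\}$ since $u\ne w$ --- a contradiction, which proves the claim.

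Given the claim, $|\psi(N_G(v))|\ge |N_G(v)\setminus R_v|=d-|R_v|$, hence $d-|\psi(N_G(v))|\le|R_v|$ for every $v$. Summing over $v$ and exchanging the order of summation,
\[
\sum_{v\in V}|R_v|=\sum_{u\in V\,:\,\psi(u)\ne u}\bigl|\{v\in N_G(u): v\in P_{u\rsg\psi(u)}\}\bigr|=\bigl|\{u\in V:\psi(u)\ne u\}\bigr|\le n,
\]
where the middle equality uses that the only neighbor of $u$ lying on $P_{u\rsg\psi(u)}$ is its second vertex (again by the girth bound). Therefore $\E_{v\in V}\bigl[d-|\psi(N_G(v))|\bigr]\le\frac1n\sum_v|R_v|\le 1$, and since $d-|\psi(N_G(v))|$ is a non-negative integer, the event $|\psi(N_G(v))|<d-\sqrt d$ is exactly the event $d-|\psi(N_G(v))|\ge\lfloor\sqrt d\rfloor+1$, which by Markov's inequality has probability at most $1/(\lfloor\sqrt d\rfloor+1)<1/\sqrt d$, as required.

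I expect the injectivity claim to be the one genuinely substantive step: the key realization is that ``non-return neighbor'' is the right notion with which to discard the few neighbors of $v$ that are allowed to collide, and that the short closed walk through the two edges at $v$ forces a contradiction via the girth bound; everything else is bookkeeping plus Markov. One should also dispatch the trivial case $\psi=\mathrm{id}$ separately (then each $R_v=\emptyset$), so that one may assume $r\ge 1$ and hence $g\ge 5$, and verify the degenerate case $\psi(u)=u$, where $P_{u\rsg\psi(u)}$ is the single vertex $u$, so that $u$ counts as a non-return neighbor of each of its neighbors and the closed-walk argument still applies.
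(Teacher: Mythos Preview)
Your proof is correct and follows essentially the same approach as the paper: both bound $\sum_{v}(d-|\psi(N_G(v))|)\le n$ by showing that for each $v$ the only neighbors that can collide under $\psi$ are the ``return neighbors'' (the paper's case~(ii) vertices, i.e.\ those $u\in N_G(v)$ with $v\in P_{\psi(u)\rsg u}$), and that each $u$ is a return neighbor of at most one $v$, then finish with Markov. The only cosmetic difference is that you establish injectivity on non-return neighbors via a closed-walk/even-subgraph argument, whereas the paper uses the equivalent direct distance dichotomy $d_G(\psi(u),u)=d_G(\psi(u),v)\pm 1$.
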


\begin{proof}
Fix $v \in V$. We have
\begin{align*}
d - |\psi(N_G(v))| &= \sum_{w \in V} \left(|\psi^{-1}(w) \cap N_G(v)| - \bone[\psi^{-1}(w) \cap N_G(v) \ne \emptyset]\right) \\
&= \sum_{w \in V} \max\{|\psi^{-1}(w) \cap N_G(v)| - 1, 0\}. %\\
%&= \sum_{w \in V} \max\{|\{u \in N_G(v) \mid \psi(u) = w\}| - 1, 0\}.
\end{align*}
Consider any fixed $w, v \in V$ such that $\psi^{-1}(w) \cap N_G(v) \ne \emptyset$.
Note that, since $\rad_G(\psi) < g / 2 - 1$, any $u \in \psi^{-1}(w) \cap N_G(v)$ satisfies $d_G(w, u) < g/2 - 1$ and $d_G(u, v) = 1$. This also means that $d_G(w, v) \leq d_G(w, u) + d_G(u, v) < g/2$. Since the girth of the graph is $g$, it must be the case that (i) $d_G(w, u) = d_G(w, v) - 1$ and $u\in P_{w \rsg v}$ or (ii) $d_G(w, u) = d_G(w, v) + 1$ and $v \in P_{w \rsg u}$. There is just one vertex $u$ in case (i), i.e. the vertex right next to $v$ in the path $P_{w \rsg v}$. Thus, $|\psi^{-1}(w) \cap N_G(v)| - 1$ is at most the number of vertices in case (ii). Plugging this into the above, we get
\begin{align*}
d - |\psi(N_G(v))| \leq \sum_{w \in V} |\{u \in N_G(v) \mid \psi(u) = w \wedge d_G(w, u) = d_G(w, v) + 1, v \in P_{w \rsg u}\}|.
\end{align*}
Summing this up over all $v \in V$, we get
\begin{align*}
&\sum_{v \in V} \left(d - |\psi(N_G(v))|\right) \\
&\leq \sum_{v \in V} \sum_{w \in V} |\{u \in N_G(v) \mid \psi(u) = w \wedge d_G(w, u) = d_G(w, v) + 1, v \in P_{w \rsg u}\}| \\
&= \sum_{v \in V} \sum_{w \in V} \sum_{u \in V} \bone[\psi(u) = w]\bone[d_G(w, u) = d_G(w, v) + 1, v \in P_{w \rsg u}] \\
&= \sum_{v \in V} \sum_{u \in V} \bone[d_G(\psi(u), u) = d_G(\psi(u), v) + 1, v \in P_{\psi(u) \rsg u}] \\
&= \sum_{u \in V} |\{v \in V \mid d_G(\psi(u), u) = d_G(\psi(u), v) + 1, v \in P_{\psi(u) \rsg u}\}| \\
&\leq \sum_{u \in V} 1 \\
&= n,
\end{align*}
where the second inequality is because, for each fixed $u \in V$, there is just one $v \in V$ that satisfies $d_G(\psi(u), u) = d_G(\psi(u), v) + 1, v \in P_{\psi(u) \rsg u}$, i.e. the vertex right next to $u$ in the path $P_{\psi(u) \rsg u}$. 

Applying Markov's inequality to the above sum yields the desired result.
\end{proof}

Since we will not construct the dataset by taking the entire neighborhood of $v$ but rather by sampling random $m$ neighbors, the following lemma will be more convenient.

\begin{lemma} \label{lem:dataset-partition}
Let $G$ be any $n$-vertex $d$-regular graph with girth at least $g$ and any $m \in \N$. For any $\psi: V \to V$ such that $\rad_G(\psi) < g / 2 - 1$, we have
\begin{align*}
\Pr_{v \sim V, u_1, \dots, u_m \sim N_G(v)}[|\psi(\{u_1, \dots, u_m\})| < m] < m^2/\sqrt{d},
\end{align*}
where $u_1, \dots, u_m \sim N_G(v)$ denote $m$ i.i.d. sample drawn u.a.r. from $N_G(v)$.
\end{lemma}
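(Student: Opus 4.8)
The plan is to reduce the event $|\psi(\{u_1,\dots,u_m\})| < m$ to a ``birthday''-type collision event and then combine \Cref{lem:neighbor-map-size-bound} with a union bound over pairs. First note that $|\psi(\{u_1,\dots,u_m\})| < m$ holds if and only if $\psi(u_1),\dots,\psi(u_m)$ are not all distinct, i.e.\ there exist $i \ne j$ with $\psi(u_i) = \psi(u_j)$. So by a union bound,
\[
\Pr[|\psi(\{u_1,\dots,u_m\})| < m] \;\le\; \sum_{1 \le i < j \le m} \Pr[\psi(u_i) = \psi(u_j)].
\]
Since conditioning on $v$ makes $u_1,\dots,u_m$ i.i.d.\ uniform on $N_G(v)$, each summand equals $\E_{v \sim V}\big[\Pr_{u,u' \sim N_G(v)}[\psi(u) = \psi(u')]\big]$, and it suffices to bound this single pairwise collision probability.

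Next I would split on whether $v$ is ``good'' in the sense of \Cref{lem:neighbor-map-size-bound}, namely $|\psi(N_G(v))| \ge d - \sqrt d$; by that lemma this fails with probability $< 1/\sqrt d$ (note $\rad_G(\psi) < g/2 - 1$ is exactly the hypothesis there). For a good $v$, writing $a_w := |\psi^{-1}(w) \cap N_G(v)|$, we have $\Pr_{u,u'}[\psi(u)=\psi(u')] = \tfrac1{d^2}\sum_w a_w^2$. Since $\sum_w a_w = d$ while $\sum_w (a_w-1)_+ = d - |\psi(N_G(v))| \le \sqrt d$, a short calculation (using $a_w^2 = a_w + (a_w-1)_+^2 + (a_w-1)_+$ and convexity, $\sum (a_w-1)_+^2 \le (\sum (a_w-1)_+)^2$) gives $\sum_w a_w^2 \le d + (\sqrt d)^2 + \sqrt d = 2d + \sqrt d \le 3d$, so the conditional collision probability is at most $3/d$. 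For bad $v$ we use the trivial bound $1$. Combining, conditioning on $v$ and summing over the $\binom m2$ pairs yields
\[
\Pr[|\psi(\{u_1,\dots,u_m\})| < m] \;<\; \frac1{\sqrt d} + \binom m2 \cdot \frac3d \;<\; \frac1{\sqrt d} + \frac{3m^2}{2d}.
\]

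Finally I would convert this into the claimed bound. If $m = 1$ the left-hand side is $0$, and if $m^2/\sqrt d > 1$ the statement is trivial since a probability is at most $1$; so we may assume $m \ge 2$ and $d \ge m^4 \ge 16$. Then $\sqrt d \ge m^2 \ge \tfrac{3m^2}{2(m^2-1)}$ (the last inequality because $m \ge 2$), which rearranges to $\tfrac{3m^2}{2d} \le \tfrac{m^2-1}{\sqrt d}$, and plugging into the displayed bound gives $\Pr[|\psi(\{u_1,\dots,u_m\})| < m] < \tfrac1{\sqrt d} + \tfrac{m^2-1}{\sqrt d} = \tfrac{m^2}{\sqrt d}$, as desired. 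The only mildly delicate point is this last bookkeeping: the main argument immediately yields a bound of the form $O(m^2/\sqrt d)$, and obtaining the clean constant $1$ in the statement requires disposing of the small-$d$ and $m=1$ regimes separately as above; everything else is a routine union bound plus the fiber-size estimate.
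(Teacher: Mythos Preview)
Your proof is correct and follows essentially the same route as the paper: condition on $v$, use \Cref{lem:neighbor-map-size-bound} to dispose of the ``bad'' vertices, and for good $v$ bound the pairwise collision probability $\tfrac{1}{d^2}\sum_w a_w^2$ and apply a union bound over the $\binom{m}{2}$ pairs. The one difference is that the paper bounds $\sum_w a_w^2$ more crudely via $\max_w a_w \le 2\sqrt d$, giving collision probability $\le 2/\sqrt d$ per pair, which sums directly to $\tfrac{1}{\sqrt d} + \tfrac{m(m-1)}{\sqrt d} \le \tfrac{m^2}{\sqrt d}$ without the extra end-case bookkeeping your sharper $3/d$ estimate required.
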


\begin{proof}
We have
\begin{align*}
&\Pr_{v \sim V, u_1, \dots, u_m \sim N_G(v)}[|\psi(\{u_1, \dots, u_m\})| < m] \\
&\leq \Pr_{v \sim V}[|\psi(N_G(v))| < d - \sqrt{d}] + \Pr_{v \sim V, u_1, \dots, u_m \sim N_G(v)}[|\psi(\{u_1, \dots, u_m\})| < m \mid |\psi(N_G(v))| \geq d - \sqrt{d}] \\
&< \frac{1}{\sqrt{d}} + \Pr_{v \sim V, u_1, \dots, u_m \sim N_G(v)}[|\psi(\{u_1, \dots, u_m\})| < m \mid |\psi(N_G(v))| \geq d - \sqrt{d}],
\end{align*}
where the second inequality is from \Cref{lem:neighbor-map-size-bound}.

To bound the second term, let us fix any $v \in V$ such that $|\psi(N_G(v))| \geq d - \sqrt{d}$. Note that this implies that $\max_{w \in V} |\psi^{-1}(w) \cap N_G(v)| \leq \sqrt{d} + 1 \leq 2\sqrt{d}$. From this, we can derive
\begin{align*}
\Pr_{u_1, \dots, u_m \sim N_G(v)}[|\psi(\{u_1, \dots, u_m\})| < m]
&= \Pr_{u_1, \dots, u_m \sim N_G(v)}[\exists 1 \leq i < j \leq m, \psi(u_i) = \psi(u_j)] \\
&\leq \sum_{1 \leq i < j \leq m} \Pr_{u_i, u_j \sim N_G(v)}[\psi(u_i) = \psi(u_j)] \\
&= \sum_{1 \leq i < j \leq m} \frac{1}{d^2} \left(\sum_{w \in V} |\psi^{-1}(w) \cap N_G(v)|^2\right) \\
&\leq \sum_{1 \leq i < j \leq m} \frac{2\sqrt{d}}{d^2} \left(\sum_{w \in V} |\psi^{-1}(w) \cap N_G(v)|\right) \\
&= \sum_{1 \leq i < j \leq m} \frac{2}{\sqrt{d}} \\
&\leq \frac{m^2 - 1}{\sqrt{d}}
\end{align*}
where in the second inequality we use the fact that $\max_{w \in V} |\psi^{-1}(w) \cap N_G(v)| \leq 2\sqrt{d}$.

Combining the two preceding inequalities yield our claimed bound.
\end{proof}

We can now complete the proof of \Cref{thm:lb-general} by following the standard packing lemma paradigm~\cite{HardtT10} where the dataset is constructed as in the above lemma.

\begin{proof}[Proof of \Cref{thm:lb-general}]
We assume w.l.o.g. that $n$ is even. We will prove a lower bound of $\Omega\left(\sqrt{\frac{\log n}{\eps}}\right)$ on the expected approximation ratio for $\eps > \eps^* := \frac{C(\log \log n)^2}{\log n}$ where $C$ is a sufficiently large constant. Note that, if $\eps \leq \eps^*$, we can simply use this lower bound for $\eps^*$ (since any $\eps$-DP algorithm is also $\eps^*$-DP in this case) which yields $\Omega\left(\frac{\log n}{\log \log n}\right) = \Omega\left(\min\left\{\frac{\log n}{\log \log n}, \sqrt{\frac{\log n}{\eps}}\right\}\right)$ as desired. Thus, we may henceforth assume that $\eps > \eps^*$.

We select our parameters as follows:
\begin{itemize}
\item Target approximation ratio lower bound: $\gamma := 0.0001 \cdot \sqrt{\frac{\log n}{\eps}}$,
\item Graph girth: $g := \sqrt{\frac{\log n}{\eps}}$,
\item Graph degree: $d := \lfloor n^{1/g} / 4 \rfloor = \exp\left(\Theta\left(\sqrt{\eps \cdot \log n}\right)\right)$,
\item Facility cost: $f := g/2 - 1 = \Theta\left(\sqrt{\frac{\log n}{\eps}}\right)$,
\item Dataset size: $m := \lceil 0.1 \log(d) / \eps\rceil = \Theta\left(\sqrt{\frac{\log n}{\eps}}\right)$.
\end{itemize}

Consider any $\eps$-DP algorithm $\cA$ for DPFL. Let $G = (V, E)$ be an $n$-vertex $d$-regular graph with girth at least $g$; \Cref{thm:large-girth-graph} implies the existence of such a graph for any sufficiently large $n$. Let the metric space be $\cM = (V, d_G)$ and let $f_v = f$ for all $v \in V$. We may assume w.l.o.g. that any output $\psi$ of $\cA$ satisfies $\rad_G(\psi) < f$. Otherwise, if there exists $u \in V$ such that $d_G(\psi(u), u) \geq f$, then we may simply let $\psi(u) = u$; this does not increase the cost of the solution and, since this is a post-processing, does not impact the DP guarantee of the algorithm~\cite[Proposition 2.1]{DworkR14}.

Suppose for the sake of contradiction that $\cA$ achieves an expected approximation ratio of $\gamma$ on $\cM$. Let $\cD$ denote the distribution of $\bX = (x_1, \dots, x_m)$ corresponding to picking $v \in V$ at random and picking $m$ of its neighbors $x_1, \dots, x_m$ randomly with replacement. \Cref{lem:dataset-partition} implies that, for any $\psi: V \to V$ such that $\rad_G(\psi) < f$, we have
\begin{align} \label{eq:packing1}
\Pr_{\bX \sim \cD}[\cost_{\cM, \bff, \bX}(\psi) < m \cdot f] < \frac{m^2}{\sqrt{d}} \leq \frac{0.1}{d^{0.1}},
\end{align}
where the second inequality holds for any sufficiently large $n$ due to our choice of $m, d$ and from $\eps \geq \frac{C(\log \log n)^2}{\log n}$.

Meanwhile, for any $\bX \in \supp(\cD)$, if we open a single facility at $v$, the cost is $f + m$; this means that $\opt_{\cM, \bff, \bX} \leq f + m$. Notice also that  $2\gamma(f + m) < m \cdot f$ by our choice of parameters. Thus, by the approximation guarantee of $\cA$, we have
\begin{align*}
0.5 \leq \Pr_{\psi \gets \cA(\bX)}[\cost_{\cM, \bff, \bX}(\psi) < m \cdot f] & &\forall \bX \in \supp(\cD).
\end{align*}
Let $\bX^*$ be any dataset in $V^m$.
By $\eps$-DP guarantee of $\cA$, we have\footnote{See e.g. \cite[Theorem 2.2]{DworkR14}.}
\begin{align*}
0.5 \cdot \exp(-\eps \cdot m) \leq \Pr_{\psi \gets \cA(\bX^*)}[\cost_{\cM, \bff, \bX}(\psi) < m \cdot f]. & &\forall \bX \in \supp(\cD).
\end{align*}
Taking the expectation of the above over $\bX \sim \cD$, we have
\begin{align*}
0.5 \cdot \exp(-\eps \cdot m) 
&\leq \Pr_{\bX \sim \cD, \psi \gets \cA(\bX^*)}[\cost_{\cM, \bff, \bX}(\psi) < m \cdot f] \\
&\leq \Pr_{\psi \gets \cA(\bX^*)}\left[\Pr_{\bX \sim \cD}[\cost_{\cM, \bff, \bX}(\psi) < m \cdot f]\right] \\
&\overset{\eqref{eq:packing1}}{<} \frac{0.1}{d^{0.1}},
\end{align*}
which is a contradiction since, from our choice of $m$, the LHS is at least $0.2/d^{0.1}$.
\end{proof}

\section{Open Questions}

An open question is to close the gap between our lower bound and the upper bound from \cite{CEFG22}. Perhaps the most representative case here is when $\eps$ is a constant, e.g. $\eps = 1$. Since all known algorithms embed the metric into tree metrics, these algorithms cannot achieve better than $O(\log n)$ approximation ratio. Meanwhile, it is unclear how to push our lower bound beyond $\tOmega(\sqrt{\log n})$ here. 

Another open question is whether one can extend our lower bound to hold against approximate-DP algorithms. It is not hard to see that the above argument also for $(\eps, \delta)$-DP algorithm where $\delta \ll -\exp(\eps \cdot m)$. However, a typically interesting regime of $\delta$ is when $\delta = m^{-\Theta(1)}$.

\bibliographystyle{alpha}
\bibliography{ref}

\end{document}